\algrenewcommand\algorithmicrequire{\textbf{Input:}}
\algrenewcommand\algorithmicensure{\textbf{Output:}}
\newcommand{\bigO}[1]{$\mathcal{O}\paren{#1}$\xspace}
\newcommand{\bigOmega}[1]{\mbox{\mbox{$\Omega$}$\paren{#1}$}\xspace}
\newcommand{\bigTheta}[1]{\mbox{\mbox{$\Theta$}$\paren{#1}$}\xspace}
\newcommand{\paren}[1]{\left(  #1 \right)}
\newcommand{\edge}[1]{(  #1 )}
\newtheorem{theorem}{Theorem}
\newtheorem{lemma}{Lemma}
\newtheorem{definition}[theorem]{Definition}
\def\BibTeX{{\rm B\kern-.05em{\sc i\kern-.025em b}\kern-.08em
    T\kern-.1667em\lower.7ex\hbox{E}\kern-.125emX}}
\newcommand{\linebreakand}{%
  \end{@IEEEauthorhalign}
  \hfill\mbox{}\par
  \mbox{}\hfill\begin{@IEEEauthorhalign}
}
\begin{document}

\title{Triangle Counting Through Cover-Edges}
\author{

\IEEEauthorblockN{David A. Bader$^*$, Fuhuan Li, Anya Ganeshan$^+$, Ahmet Gundogdu$^\#$, Jason Lew, Oliver Alvarado Rodriguez, Zhihui Du}
\IEEEauthorblockA{\textit{Department of Data Science} \\
\textit{New Jersey Institute of Technology}\\
Newark, New Jersey, USA\\
\{bader,fl28,jl247,oaa9,zhihui.du\}@njit.edu}

}

\maketitle

\begingroup\renewcommand\thefootnote{$^+$}
\footnotetext{Anya Ganeshan attends Bergen County Academies, Hackensack, NJ}
\endgroup
\begingroup\renewcommand\thefootnote{$^\#$}
\footnotetext{Ahmet Gundogdu attends Paramus High School, Paramus, NJ}
\endgroup
\begingroup\renewcommand\thefootnote{$^*$}
\footnotetext{This research was partially supported by NSF grant number CCF-2109988.}
\endgroup
\begin{abstract}
\looseness=-1
Counting and finding triangles in graphs is often used in real-world analytics to characterize cohesiveness and identify communities in graphs. In this paper, we propose the novel concept of a cover-edge set that can be used to find triangles more efficiently. We use a breadth-first search (BFS) to quickly generate a compact cover-edge set. Novel sequential and parallel triangle counting algorithms are presented that employ cover-edge sets. The sequential algorithm avoids unnecessary triangle-checking operations, and the parallel algorithm is communication-efficient. The parallel algorithm can asymptotically reduce communication on massive graphs such as from real social networks and synthetic graphs from the Graph500 Benchmark. In our estimate from massive-scale Graph500 graphs, our new parallel algorithm can reduce the communication on a scale~36 graph by 1156x and  on a scale~42 graph by 2368x. 
\end{abstract}

\begin{IEEEkeywords}
Graph Algorithms, Triangle Counting, Parallel Algorithms, High Performance Data Analytics
\end{IEEEkeywords}

\section{Introduction}
\label{s:introduction}
\emph{Triangle counting} \cite{al2018triangle} is a fundamental problem in graph analytics, which involves finding the number of unique triangles in a graph. It plays a crucial role in various graph analysis techniques such as clustering coefficients \cite{watts1998collective}, k-truss \cite{cohen2008trusses}, and triangle centrality \cite{burkhardt2021triangle}. The significance of triangle counting is evident in its application in high-performance computing benchmarks like Graph500 \cite{graph500} and the MIT/Amazon/IEEE Graph Challenge \cite{GraphChallenge}, as well as in the design of future architecture systems (e.g., IARPA AGILE \cite{slides_on_AGILE}).

Both sequential and parallel triangle counting algorithms have been studied extensively since 1977 \cite{itai1977}.  Latapy \cite{latapy2007practical} provides a comprehensive overview of sequential triangle counting and various finding algorithms. Existing techniques, including list intersection, matrix multiplication, and subgraph matching \cite{pandey2021trust}, are  techniques used to count triangles.

\looseness=-1
To enhance the performance of triangle counting, Cohen \cite{cohen2009graph} introduced a novel map-reduce parallelization technique that generates \emph{open wedges} between triples of vertices in the graph. It determines whether a closing edge exists to complete a triangle, thus avoiding the redundant counting of the same triangle while maintaining load balancing. Many parallel approaches for triangle counting \cite{pearce2017triangle,ghosh2020tric} partition the sparse graph data structure across multiple compute nodes and adopt the strategy of generating open wedges, which are sent to other compute nodes to determine the presence of a closing edge. Consequently, the communication time for these open wedges often dominates the running time of parallel triangle counting.

In traditional edge-based triangle counting methods, all triangles are identified by accumulating the sizes of intersections between pairs of endpoints for each edge. \emph{Direction-oriented} approaches can avoid counting the same triangle multiple times. However, in this paper, we propose a novel approach that efficiently identifies all triangles using a reduced set of edges known as a cover-edge set. By leveraging the cover-edge-based triangle counting method, unnecessary edge checks can be skipped while ensuring that no triangles are missed. This significantly reduces the number of computational operations compared to existing methods. Furthermore, for distributed parallel algorithms, the cover-edge-based method can greatly reduce overall communication requirements. As a result, our proposed method offers improved efficiency and scalability for triangle counting.

Our contributions include:
\begin{itemize}
\item A novel concept, \emph{Cover-Edge Set}, is proposed to support efficient triangle counting. The essential idea is that we can identify all triangles from a significantly reduced cover-edge set instead of the complete edge set. A simple breadth-first search (BFS) is used to orient the graph's vertices into levels and to generate the cover-edge set.
\item A novel triangle counting and finding algorithm, \emph{CETC}, is developed based on the concept of \emph{Cover-Edge Set}.  \emph{CETC} runs in $\mathcal{O}(m \cdot d_{\text{max}})$ time and $\mathcal{O}(n+m)$ space, where $d_{\text{max}}$ is the maximal degree of a vertex $v \in V$.
\item A novel communication-efficient distributed parallel algorithm for triangle counting and finding, \emph{Comm-CETC}, is also developed based on the concept of \emph{Cover-Edge Set}. \emph{Comm-CETC} can asymptotically reduce the communication to improve total performance.
\end{itemize}

The remainder of the paper is organized as follows. Section~\ref{sec:new_approach} presents our new approach for triangle counting. In Section~\ref{sec:parallel}, we employ our idea in distributed parallel triangle counting to reduce communication. Section~\ref{sec:related_work} discusses related work. Lastly, in Section~\ref{sec:conclusion}, we conclude the paper.

\section{Cover-Edge Set Based Triangle Counting}
\label{sec:new_approach}

\subsection{Notations and Basic Idea}
\label{subsec:notation}
Let $G = (V, E)$ be an undirected graph with $n=|V|$ vertices and $m = |E|$ edges. A \emph{triangle} in the graph is a set of three vertices $\{v_a, v_b, v_c\} \subseteq V$ such that $\{\edge{v_a, v_b}, \edge{v_a, v_c}, \edge{v_b, v_c}\} \subseteq E$. We will use $N(v) = \{ u|u \in V \wedge (\edge{v,u} \in E)\}$ to denote the \emph{neighbor set} of vertex $v \in V$.
The degree of vertex $v \in V$ is $d(v) = |N(v)|$, and $d_{\text{max}}$ is the maximal degree of a vertex in graph $G$.

\begin{definition}[Cover-Edge and Cover-Edge Set]\label{def:coveredge}
For any edge $e$ of a triangle $\Delta$ in graph $G$, $e$ is referred to as a cover-edge of $\Delta$. For a given graph $G$, an edge set $S \subseteq E$ is called a cover-edge set if it contains at least one cover-edge for every triangle in $G$.
\end{definition}

Based on the given definition, it is evident that the entire edge set $E$ can serve as a cover-edge set $S$ for graph $G$. However, our proposed method aims to efficiently count all triangles using a smaller subset of edges instead of $E$. Thus, the primary challenge lies in generating a compact cover-edge set, which forms the initial problem to be addressed in our approach. Let $k=|S|/|E|$. Our goal is to identify cover-edge sets with the smallest $k$. In this paper, we propose using breadth-first search (BFS) to generate a compact cover-edge set.

\begin{definition}[BFS-Edge]\label{def:bfsedges}
Let $r$ be the root vertex of an undirected graph $G$. The level $L(v)$ of a vertex $v$ is defined as the shortest distance from $r$ to $v$ obtained through a breadth-first search (BFS). From the BFS, we classify the edges into three types:
\begin{itemize}
\item \emph{Tree-Edges}: These edges belong to the BFS tree.
\item \emph{Strut-Edges}: These are non-tree edges with endpoints on two adjacent levels in the BFS traversal.
\item \emph{Horizontal-Edges}: These are non-tree edges with endpoints on the same level in the BFS traversal.
\end{itemize}
\end{definition}

\begin{figure}
    \centering
    \includegraphics[width=0.35\textwidth]{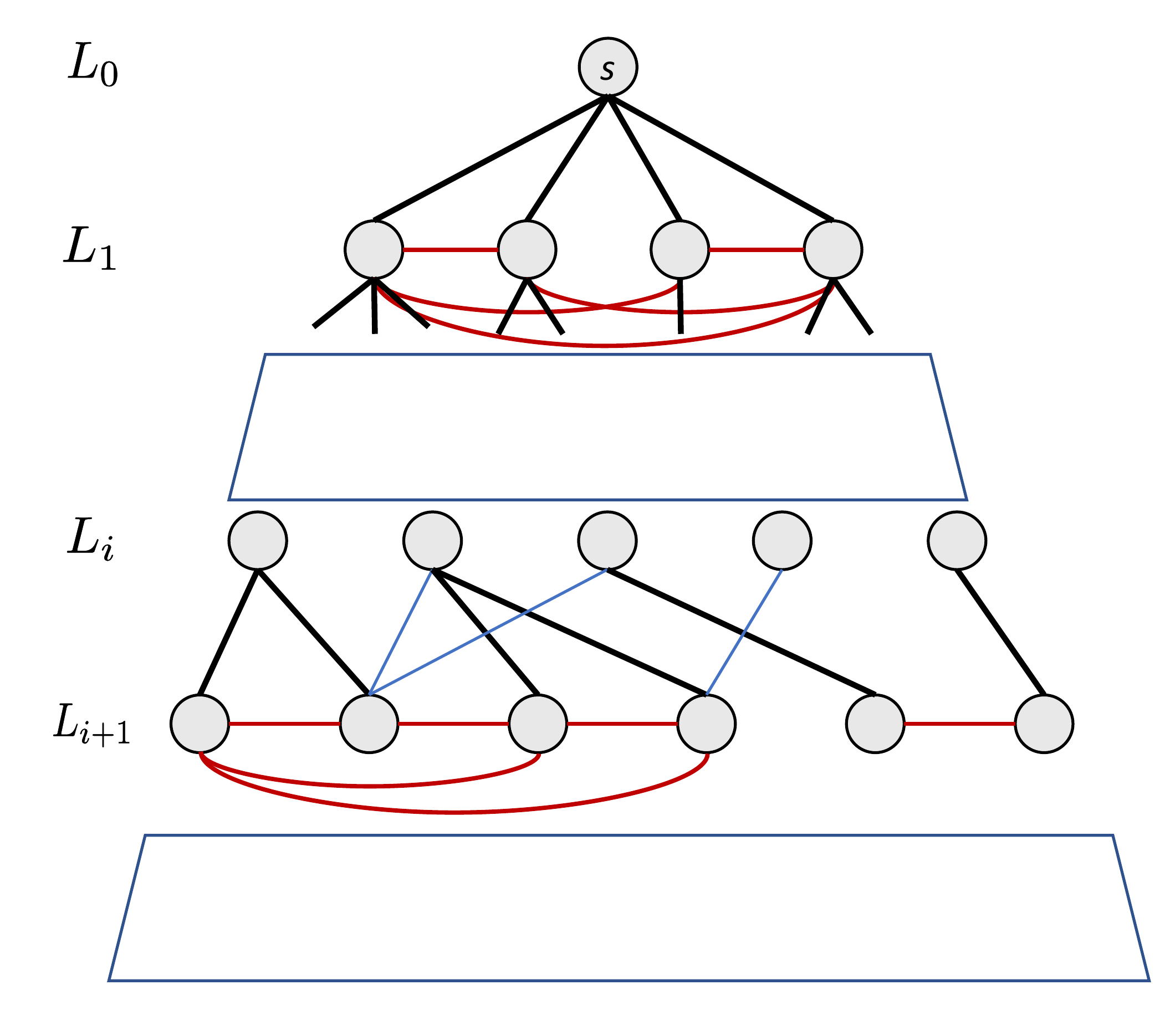}
    \caption{Example BFS tree of Graph $G$. The tree-edges are black, strut-edges are blue, and horizontal-edges are red.}
    \label{fig:bfs}
\end{figure}
Fig.~\ref{fig:bfs} gives an example of these different edge types. 

\begin{lemma}
Each triangle $\{u, v, w\}$ in a graph contains at least one horizontal-edge.
\label{lem:H-E}
\end{lemma}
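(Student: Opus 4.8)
The plan is to argue by the pigeonhole principle on the three BFS-levels $L(u), L(v), L(w)$ of the triangle's vertices. First I would observe that any two adjacent vertices in a BFS tree have levels differing by at most one: if $(x,y)\in E$, then $|L(x)-L(y)|\le 1$, since otherwise the BFS would have discovered the deeper vertex one level earlier through the shallower one. This is the only structural fact about BFS that the argument needs, and it is standard, so I would state it in a single sentence rather than belabor it.

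Next I would apply this bound to all three edges of the triangle $\{u,v,w\}$. Consider the three integers $L(u), L(v), L(w)$. By the previous observation, these three values are pairwise within $1$ of each other, so the set $\{L(u),L(v),L(w)\}$ is contained in an interval of length~$1$ and hence takes at most two distinct values. With three vertices and at most two distinct level-values available, the pigeonhole principle forces at least two of the three vertices — say $x$ and $y$ — to satisfy $L(x)=L(y)$. Since $(x,y)$ is an edge of the triangle and is a non-tree edge or a tree edge joining two vertices on the same level; in fact a tree edge always connects consecutive levels, so $(x,y)$ cannot be a tree-edge and therefore is a non-tree edge with both endpoints on the same level, i.e.\ a horizontal-edge by Definition~\ref{def:bfsedges}. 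That edge is the desired horizontal-edge of the triangle.

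The only point that needs a little care — and the closest thing to an obstacle — is ruling out that the equal-level edge $(x,y)$ is a tree-edge; this is immediate because a BFS tree-edge connects a vertex at level $\ell$ to its parent at level $\ell-1$, so its endpoints lie on adjacent (distinct) levels. A second minor subtlety is the case where all three levels are already equal: then every edge of the triangle is horizontal and the conclusion holds trivially, so no genuine case split is needed. I would also remark, for use in the sequel, that this argument actually shows the triangle has either exactly one horizontal-edge (when the two equal-level vertices sit one level below or above the third) or all three edges horizontal (when all three levels coincide) — but the lemma as stated only needs the existence of one, so I would keep the proof to the two short paragraphs above.
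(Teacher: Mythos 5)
Your proof is correct, but it takes a genuinely different route from the paper's. The paper argues by contradiction and parity: viewing the triangle as a closed walk of length~3, it notes that if every edge were a tree- or strut-edge then each step would change the level by exactly one, so the walk could only return to its starting level after an even number of steps, contradicting the odd length~3. You instead argue directly by pigeonhole: the standard BFS fact $|L(x)-L(y)|\le 1$ for every edge $(x,y)$ forces the three levels into at most two distinct values, so some two adjacent vertices of the triangle share a level, and that edge cannot be a tree-edge (tree-edges join consecutive levels), hence is horizontal. Both proofs ultimately rest on the same BFS property that no edge spans more than one level --- the paper leaves this implicit in its three-way edge classification of Definition~\ref{def:bfsedges}, while you state it explicitly, which is a small improvement in rigor. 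The trade-off: the paper's parity argument generalizes immediately to any odd cycle, whereas your pigeonhole argument is specific to triangles but is constructive (it exhibits the horizontal edge) and, as you note, delivers the ``exactly one or exactly three horizontal-edges'' refinement of Lemma~\ref{lem:HE2} for free, which the paper has to re-derive separately from the parity count.
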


\begin{proof}
(Proof by contradiction)
A triangle is a path of length 3 that starts and ends at the same vertex. Suppose there are no horizontal-edges in the triangle. In that case, every edge in the path (i.e., a tree-edge or strut-edge) either increases or decreases the level by one.

Since the path must end on the same level as the starting vertex, the number of edges in the path that decrease the level must be equal to the number of edges that increase the level. Consequently, the length of the path must be even to maintain level parity. However, this contradicts the fact that a triangle has an odd path length of 3.

Therefore, we conclude that there must be at least one horizontal-edge in every triangle.
\end{proof}

\begin{theorem}[Cover-Edge Set Generation]
All horizontal-edges form a valid cover-edge set.
\label{theo:Gen-CES}
\end{theorem}

\begin{proof}
According to Definition \ref{def:coveredge}, for any triangle $\Delta$ in graph $G$, we can always find at least one horizontal-edge that serves as a cover-edge for $\Delta$. Thus, the set of all horizontal-edges constitutes a cover-edge set.
\end{proof}

Therefore, we can construct a cover-edge set, denoted as \emph{BFS-CES}, by selecting all the horizontal-edges obtained during a breadth-first search (\emph{BFS}). It is evident that \emph{BFS-CES} is a subset of $E$ and is typically much smaller than the complete edge set $E$.  

\subsection{Cover-Edge based Triangle Counting}
In this subsection, we provide a comprehensive description of the process involved in identifying all triangles using a cover-edge set generated through a breadth-first search.

\begin{lemma}
Each triangle $\{u, v, w\}$ must contain either one or three horizontal-edges.
\label{lem:HE2}
\end{lemma}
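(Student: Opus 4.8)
The plan is to build directly on Lemma~\ref{lem:H-E}, which already guarantees that every triangle has at least one horizontal-edge, and then to rule out the remaining bad case. Since a triangle has exactly three edges, the number of horizontal-edges among them lies in $\{1,2,3\}$; so it suffices to show that exactly two horizontal-edges is impossible.

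First I would record the elementary BFS fact underlying Definition~\ref{def:bfsedges}: every edge of $G$ joins two vertices whose levels differ by at most one. Equivalently, for an edge $\edge{x,y}$ we have either $L(x)=L(y)$ (a horizontal-edge) or $|L(x)-L(y)|=1$ (a tree-edge or strut-edge). Then I would argue by contradiction: suppose the triangle $\{u,v,w\}$ contains exactly two horizontal-edges. In a triangle every pair of edges shares a common vertex, so without loss of generality the two horizontal-edges are $\edge{u,v}$ and $\edge{v,w}$. Horizontality gives $L(u)=L(v)$ and $L(v)=L(w)$, hence $L(u)=L(w)$, which forces the third edge $\edge{u,w}$ to be horizontal as well — contradicting the assumption that only two of the three edges are horizontal. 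Combining this with Lemma~\ref{lem:H-E} yields that the count is either $1$ or $3$.

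The only point requiring a little care is the ``without loss of generality'' step: one must observe that in a triangle no two edges are vertex-disjoint, so any two horizontal-edges must be incident to a common vertex, and relabeling the shared vertex as $v$ is harmless. I do not expect a genuine obstacle here — the statement is essentially a sharpening of the level-parity bookkeeping already used to prove Lemma~\ref{lem:H-E}, now applied to the three vertices' levels directly rather than to a closed walk.
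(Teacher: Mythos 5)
Your proof is correct, but it takes a somewhat different route from the paper's. The paper reuses the parity bookkeeping from the proof of Lemma~\ref{lem:H-E} directly: since the closed walk of length 3 must return to its starting level, the number of level-changing edges (tree- or strut-edges) among the three must be even, hence 0 or 2, so the number of horizontal-edges is 3 or 1. You instead invoke Lemma~\ref{lem:H-E} only for the lower bound (at least one horizontal-edge) and then exclude the count of exactly two by a transitivity argument: two horizontal-edges in a triangle necessarily share a vertex, so $L(u)=L(v)=L(w)$ forces the third edge to be horizontal as well. Your exclusion step is more elementary --- it needs nothing beyond the definition of a horizontal-edge and the fact that any two edges of a triangle meet, and in particular it does not rely on the BFS property that non-horizontal edges change the level by exactly one (only Lemma~\ref{lem:H-E} needs that). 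The paper's version buys a single unified argument that yields both conclusions at once; yours buys a cleaner separation of concerns and a ``not exactly two'' step that would survive even in settings where non-tree edges could skip levels. Your ``without loss of generality'' step is handled with appropriate care and presents no gap.
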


\begin{proof}
By referring to the proof of Lemma~\ref{lem:H-E}, we know that the path corresponding to the triangle's three edges consists of an even number of tree-edges and strut-edges. This implies that there can be either 0 or 2 tree- or strut-edges within each triangle.

In the case where there are 0 tree- or strut-edges, all three edges of the triangle must be horizontal-edges. This is because the absence of tree- or strut-edges implies that the entire path is composed of horizontal-edges.

In the case where there are 2 tree- or strut-edges, the triangle contains exactly one horizontal-edge. This is because having two tree- or strut-edges in the path means that there is one horizontal-edge connecting the remaining two vertices.

Therefore, we conclude that each triangle $\{u, v, w\}$ must contain either one or three horizontal-edges.
\end{proof}

Our triangle counting approach, described in Alg.~\ref{alg:seq}, efficiently counts triangles using a cover-edge set. In line \ref{l:init}, we initialize the counter $T$ to 0, which will store the total number of triangles. To generate the cover-edge set, we perform a breadth-first search (BFS) starting from any unvisited vertex, identifying the 
level ($L(v)$) of each vertex $v$ in its respective component, as shown in lines \ref{l:bfs1} to \ref{l:bfs2}. In lines \ref{l:edge1} to \ref{l:edge2} the algorithm iterates over each edge, selecting the cover-set of horizontal edges $\edge{u,v}$ in a direction-oriented fashion in line~\ref{l:horiz}.
For each vertex $w$ in the intersection of $u$ and $v$'s neighborhoods (line~\ref{l:intersect}), we check the following two conditions to determine if $(u, v, w)$ is a unique triangle to be counted (line~\ref{l:logic}).
If $L(u) \neq L(w)$  then the edge $\edge{u, v}$ is the only horizontal-edge in the triangle $(u, v, w)$. If $L(u) \equiv L(w)$, then the edge $\edge{u,v}$ is one of three horizontal-edges in the triangle $(u, v, w)$. To ensure uniqueness, the algorithm then checks the added constraint that $v<w$. If the constraints are satisfied, we increment the triangle counter $T$ in line~\ref{l:edge2}.

This approach effectively counts the triangles in the graph while avoiding redundant counting.

\begin{algorithm}[htbp]
\footnotesize
\caption{CETC:Cover-Edge Triangle Counting }
\label{alg:seq}
\begin{algorithmic}[1]
\Require{Graph $G = (V, E)$}
\Ensure{Triangle Count $T$}
\State $T \leftarrow 0$  \label{l:init}
\State $\forall v \in V$ \label{l:bfs1}
\State \hspace{8pt} if $v$ unvisited, then BFS($G$, $v$) \label{l:bfs2}
\State $\forall \edge{u, v} \in E$ \label{l:edge1}
\State \hspace{8pt} if $(L(u) \equiv L(v)) \land (u < v)$ \label{l:horiz} \Comment{$\edge{u,v}$ is horizontal}
\State \hspace{16pt} $\forall w \in N(u) \cap N(v)$ \label{l:intersect}
\State \hspace{24pt} if $(L(u)\neq L(w)) \lor \left( (L(u) \equiv L(w)) \land (v<w) \right)$ then \label{l:logic}
\State \hspace{32pt}  $T\leftarrow T + 1 $     \label{l:edge2} 
\end{algorithmic}
\end{algorithm}

\begin{theorem}[Correctness]
Alg.~\ref{alg:seq} can accurately count all triangles in a graph $G$.
\label{theo:correctness}
\end{theorem}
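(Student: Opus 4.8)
The plan is to verify three properties whose conjunction is exactly the claim: \textbf{soundness} (every increment of $T$ corresponds to a genuine triangle of $G$), \textbf{completeness} (every triangle of $G$ triggers at least one increment), and \textbf{uniqueness} (every triangle triggers exactly one increment). Soundness is immediate from the structure of the algorithm: an increment at line~\ref{l:edge2} presupposes $\edge{u,v}\in E$ (the edge being scanned) together with $w\in N(u)\cap N(v)$, i.e.\ $\edge{u,w}\in E$ and $\edge{v,w}\in E$; hence $\{u,v,w\}$ is a triangle of $G$.

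For completeness and uniqueness, I would first observe that the three vertices of any triangle lie in a single connected component, so they are all assigned levels by one common BFS run (lines~\ref{l:bfs1}--\ref{l:bfs2}); therefore Lemmas~\ref{lem:H-E} and~\ref{lem:HE2} apply to it, and the triangle has either one or three horizontal-edges. I would then carry out a case analysis on this dichotomy.

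In the one-horizontal-edge case, let the horizontal-edge join the two same-level vertices and write it as $\edge{u,v}$ with $u<v$ (the orientation imposed by line~\ref{l:horiz}); the remaining vertex $w$ is joined to $u$ and $v$ by tree- or strut-edges, so $L(w)=L(u)\pm 1\neq L(u)$. When the loop reaches $\edge{u,v}$ it passes the test at line~\ref{l:horiz}, finds $w\in N(u)\cap N(v)$, and since $L(u)\neq L(w)$ the test at line~\ref{l:logic} succeeds: one increment. The other two edges are non-horizontal, so line~\ref{l:horiz} rejects them and no further increment for this triangle occurs. In the three-horizontal-edge case, all three vertices share a level; label them $a<b<c$. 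All three edges pass line~\ref{l:horiz} (each oriented smaller-to-larger), so the body of the loop executes for the pairs $(\edge{a,b},c)$, $(\edge{a,c},b)$, and $(\edge{b,c},a)$. In each, $L(u)\equiv L(w)$, so line~\ref{l:logic} reduces to the condition $v<w$, which holds only for the first pair. Again, exactly one increment.

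The main obstacle is the uniqueness bookkeeping: one must confirm that the direction filter $u<v$ on the cover-edge and the tie-breaker $v<w$ jointly single out one canonical (horizontal-edge, third-vertex) representative per triangle in \emph{both} branches of Lemma~\ref{lem:HE2}, and that the two branches are exhaustive and mutually exclusive. Exhaustiveness and exclusivity are exactly what Lemma~\ref{lem:HE2} provides; the rest is the short computation sketched above. Combining soundness with the exactly-once accounting then gives that $T$ equals the number of triangles in $G$ at termination.
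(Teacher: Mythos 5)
Your proof is correct and follows essentially the same route as the paper's: invoke Lemma~\ref{lem:HE2} to split into the one-horizontal-edge and three-horizontal-edge cases, then check that the filters $u<v$ and $v<w$ yield exactly one increment per triangle. Your write-up is in fact somewhat more careful than the paper's (explicit soundness step, the observation that all three vertices get levels from a single BFS component, and the explicit enumeration of the three candidate pairs in the equilevel case), but the underlying argument is the same.
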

\begin{proof}
Lemma~\ref{lem:HE2} establishes that a triangle in the graph falls into one of two cases: 1) the two endpoint vertices of the horizontal-edge are on the same level while the apex vertex is on a different level, or 2) all three vertices of the triangle are at the same level.

Consider a triangle $\{v_a, v_b, v_c\}$ in $G$. Without loss of generality, assume that $\edge{v_a, v_b}$ is a horizontal-edge, implying $L(v_a) \equiv L(v_b)$. Let $v_c$ be the apex vertex. The two cases can be distinguished as follows:

For the first case, each triangle is uniquely defined by a horizontal-edge and an apex vertex from the common neighbors of the horizontal-edge's endpoint vertices. Whenever Alg.~\ref{alg:seq} identifies such a triangle $\{v_a,v_b,v_c\}$, it increments the total triangle count $T$ by 1.

In the second case, where all three vertices are at the same level ($L(v_c) \equiv L(v_a) \equiv L(v_b)$), Alg.~\ref{alg:seq} ensures that $T$ is increased by 1 only when $v_a < v_b < v_c$. This condition ensures that triangle $\{v_a,v_b,v_c\}$ is counted only once, preventing triple-counting and ensuring the correctness of the triangle count.

Hence, Alg.~\ref{alg:seq} is proven to accurately count all triangles in the graph $G$.
\end{proof}

The time complexity of Alg.~\ref{alg:seq} can be analyzed as follows. The computation of breadth-first search, including determining the level of each vertex and marking horizontal-edges, requires $\mathcal{O}(n+m)$ time.

Since there are at most $\mathcal{O}(m)$ horizontal-edges, finding the common neighbors of each horizontal-edge individually can be done in $\mathcal{O}(d_{\text{max}})$ time. Here, $d_{\text{max}}$ represents the maximal degree of a vertex in the graph.

Therefore, the overall time complexity of Alg.~\ref{alg:seq} is $ \mathcal{O}(m \cdot d_{\text{max}})$.

\section{Communication Efficient Triangle Counting Algorithm}
\label{sec:parallel}
This section presents our communication-efficient parallel algorithm for counting triangles in massive graphs on a $p$-processor distributed-memory parallel computer. We will take advantage of the concept of \emph{Cover-Edge Set} to significantly improve the communication performance  of our triangle counting method.
Since distributed triangle counting is communication-bound \cite{pearce2017triangle}, this algorithm is expected to improve the overall running time.
The input graph $G$ is stored in a compressed sparse row (CSR) format. The vertices are partitioned non-uniformly to the $p$ processors such that each processor stores approximately $2m/p$ edge endpoints.
This graph input follows the format used by the majority of parallel graph algorithm implementations and benchmarks such as Graph500 and Graph Challenge. 

\subsection{Parallel Algorithm Description}
Our communication-efficient parallel algorithm (see Alg.~\ref{algparallel}) is based on the same cover-edge approach proposed in section \ref{sec:new_approach}. %
The binary operator $\oplus$ used in line~\ref{l:swap} is bitwise exclusive OR (XOR).

\begin{algorithm}[htbp]
\footnotesize
\caption{Comm-CETC: Communication Efficient Triangle Counting}
\label{algparallel}
\begin{algorithmic}[1]
\Require{Graph $G = (V, E)$}
\Ensure{Triangle Count $T$}
\State Run parallel BFS($G$) and build partial cover-edge set $S_i$ on $p_i$ \label{l:cbfs}
\State For all $p_i, i \in \{0 \ldots p-1\}$ in parallel do: \label{l:bcast1}
    \State \hspace{8pt} $t_i \leftarrow 0$  \label{l:pinit}
    \State \hspace{8pt}  $\forall \edge{u,v} \in S_i$ with $u<v$ on $p_i$ \label{l:sloop1}
        \State \hspace{16pt} $\forall w \in V_i$ such that $w \in N(u), N(v)$ \label{l:select1}
            \State \hspace{24 pt}  if $(L(u)\neq L(w)) \lor \left( (L(u) \equiv L(w)) \land (v<w) \right)$  then \label{l:count1}
                \State \hspace{32 pt} $t_i = t_i + 1$ \label{l:eloop1}
    \State \hspace{8pt} For $j \leftarrow 1$ to $p-1$ do: \label{l:bgraph}
        \State \hspace{16pt} Processors $i$ and $i \oplus j$ swap edge sets $S_i$ and $S_j$. \label{l:swap}
        \State \hspace{16pt}  $\forall \edge{u, v} \in S_j$ with $u<v$ on $p_i$ \label{l:sloop2}
            \State \hspace{24pt} $\forall w \in V_i$ such that $w \in N(u), N(v)$ \label{l:select2}
                 \State \hspace{32 pt}  if $(L(u)\neq L(w)) \lor \left( (L(u) \equiv L(w)) \land (v<w) \right)$ then \label{l:count2}
                    \State \hspace{40 pt} $t_i = t_i + 1$  \label{l:eloop2}
\State $T \leftarrow$ Reduce$(t_{i}, +)$ \label{l:reduce}
\end{algorithmic}
\end{algorithm}

Similar to the baseline \emph{CETC} algorithm, the cover-edge set $S = \cup_{i=0}^{p-1} S_i$ is determined in line~\ref{l:cbfs} by labeling the horizontal edges from a parallel BFS.

Each processor runs lines~\ref{l:bcast1} to \ref{l:eloop2} in parallel that consists of two main substeps. Local triangles are counted in lines~\ref{l:sloop1} to \ref{l:eloop1} and a total exchange of cover-edges between each pair of processors to count triangles is performed in lines~\ref{l:bgraph} to \ref{l:eloop2}. 
Note at the end of each iteration of the \emph{for} loop, processor $p_i$ can discard the cover-edge set $S_j$.
In lines~\ref{l:select1} and \ref{l:select2}, processor $p_i$ determines for each cover edge $\edge{u, v}$ all the apex vertices $w$ held locally that are adjacent to both $u$ and $v$. The logic for counting triangles in lines~\ref{l:count1} and \ref{l:count2} is similar to Alg.~\ref{alg:seq} as to only count unique triangles. Finally, a reduction operation in line~\ref{l:reduce} calculates the total number of triangles by accumulating the triangle counters across the system, i.e., $T = \sum_{i=0}^{p-1} t_i$.

\subsection{Cost Analysis}

\subsubsection{Space}

In addition to the input graph data structure, an additional bit is needed per edge (for marking a horizontal-edge) and \bigO{\lceil \log D \rceil} bits per vertex to store its level, where $D$ is the diameter of the graph. This is a total of at most $m + n\lceil \log D \rceil$ bits across the $p$ processors. Preserving the graph requires additional \bigO{n+m} space for the graph. 

\subsubsection{Compute}

The BFS costs \bigO{(n+m)/p} \cite{cormen2022algorithms}, the modified neighbor sets take \bigO{m/p}. 
The search corresponding to one cover-edge in a vertex's adjacency list takes at most \bigO{\log(d_{max})} time using binary search, and only \bigO{1} expected time using a hash table. Let $d_i$ be the degree of vertex $v_i$ where $0\le i<n$. Searching $km$ edges in all vertices' adjacency lists takes $\mathcal{O}(km\sum_{i=0}^{n-1} \log(d_i)) = \mathcal{O}(km \log(\Pi_{i=0}^{n-1}d_i))$ time. Since $\sum_{i=0}^{n-1}d_i=2m$, we know that $\log(\Pi_{i=0}^{n-1}d_i)$ reaches its maximum value when $d_i=2m/n$ for $0\le i<n$. Thus, $\mathcal{O}(km \log(\Pi_{i=0}^{n-1}d_i)) \le \mathcal{O}(km \log((2m/n)^n)) \le \mathcal{O}(kmn \log(2n^2/n)) = \mathcal{O}(mn \log(n))$.
\subsubsection{Total Communication}
In our analysis of communication cost for BFS, we measure the total communication volume independent of the number of processors. Thus, this is a conservative overestimate of communication since a fraction (e.g., $1/p$) of accesses will be on the same compute node versus message traffic between nodes. At the same time, we do not consider the savings from overlapping with the computation cost.

The cost of the breadth-first search is $m$ edge traversals with $\lceil \log D \rceil + 3 \lceil \log n \rceil$ bits communicated per edge traversal for the level information, pair of vertex ids, and vertex degree, yielding $m \cdot (\lceil \log D \rceil + 3 \lceil \log n \rceil)$ bits for the BFS.
Transferring $km$ horizontal-edges requires $kmp \lceil \log n \rceil$ bits, where $p$ is the number of processors.
The final reduction to find the total number of triangles requires $(p-1)\lceil \log n \rceil$ bits.

Hence, the total communication volume is 
$m \cdot (\lceil \log D \rceil + 3 \lceil \log n \rceil) + 
kmp \lceil \log n \rceil +
(p-1) \lceil \log n \rceil
=
m \cdot (\lceil \log D \rceil + (kp + 3) \lceil\log n \rceil) + (p -1) \lceil \log n \rceil$
bits.  Hence, since the word size is $\bigTheta{\log n}$ and $D \leq n$, 
the communication is \bigO{pm} words.

\if 0
\begin{theorem} [Communication Optimal]
Alg. \ref{algparallel} is a communication optimal algorithm for triangle counting. 
\label{optimal}
\end{theorem}
\begin{proof}
Based on communication analysis of Alg. \ref{algparallel}, its communication is \bigO{m}. Based on Lemma \ref{lem:lowerbound}, the lower bound of communication for triangle counting is also $\Omega(m)$. Thus, Alg. \ref{algparallel} is communication optimal. 
\end{proof}
\fi

\if 0
\subsubsection{Communication round analysis based on CONGEST model}
The CONGEST model allows one processor to send $\bigO{\log(n)}$ bits to all its neighbors in each round. For a given graph, there are $km$ horizontal-edges in total, with an average of $\frac{km}{p}$ horizontal-edges on each processor, and each edge only requires $\bigO{\log(n)}$ bits. As a result, in each round, any processor can transmit one horizontal-edge to its neighbors. On average, $\frac{km}{p}$ rounds are required. However, if all $km$ horizontal-edges are located on one processor, it would require $km$ rounds to complete the communication, which represents the worst-case scenario.
\fi

\section{Communication Analysis on Real and Synthetic Graphs}

\begin{table*}[!ht]
    \centering
    \caption{Communication costs for real and synthetic graph. The synthetic graphs are Graph500 RMAT graphs of scale 36 and 42. The column \textbf{`Previous'} represents the communication volume of the best prior parallel algorithms \cite{dolev2012tri,pearce2018k,sanders2023engineering}, that use wedge-checking based algorithms and \textbf{`This paper'} represents the communication cost of our new approach. \textbf{`Reduction'} represents the communication reduction between these two, and thus, the expected speedup of the parallel algorithm. Entries in \emph{italics} are estimated values.}
    \label{tab:results}
    \begin{tabular}{|l|r|r|r|r|r|r|r|r|r|}
    \hline
        \textbf{Graph} & \textbf{n} & \textbf{m} & \textbf{\# Triangles} & \textbf{\# Wedges} & \textbf{$k$} & \textbf{$p$} & \textbf{Previous} & \textbf{This paper} & \textbf{Reduction} \\ \hline
        ca-GrQc & 5242 & 14484 & 48260 & 165798 & 0.522 & 4 & 526KB & 122KB & 4.31 \\ \hline
        ca-HepTh & 9877 & 25973 & 28339 & 277389 & 0.423 & 4 & 948KB & 218KB & 4.35 \\ \hline
        as-caida20071105 & 26475 & 53381 & 36365 & 776895 & 0.225 & 4 & 2.78MB & 401KB & 7.10 \\ \hline
        facebook\_combined & 4039 & 88234 & 1612010 & 17051688 & 0.914 & 4 & 48.8MB & 893KB & 56.0 \\ \hline
        ca-CondMat & 23133 & 93439 & 173361 & 1567373 & 0.511 & 4 & 5.61MB & 897KB & 6.40 \\ \hline
        ca-HepPh & 12008 & 118489 & 3358499 & 5081984 & 0.621 & 4 & 17.0MB & 1.13MB & 15.1 \\ \hline
        email-Enron & 36692 & 183831 & 727044 & 5933045 & 0.478 & 4 & 22.6MB & 1.79MB & 12.7 \\ \hline
        ca-AstroPh & 18772 & 198050 & 1351441 & 8451765 & 0.667 & 4 & 30.2MB & 2.08MB & 14.6 \\ \hline
        loc-brightkite\_edges & 58228 & 214078 & 494728 & 6956250 & 0.441 & 4 & 26.5MB & 2.02MB & 20.4 \\ \hline
        soc-Epinions1 & 75879 & 405740 & 1624481 & 21377935 & 0.498 & 4 & 86.7MB & 4.25MB & 10.7 \\ \hline
        amazon0601 & 403394 & 2443408 & 3986507 & 96348699 & 0.529 & 8 & 436MB & 40.9MB & 10.7 \\ \hline
        
        com-Youtube & 1134890 & 2987624 & 3056386 & 209811585 & 0.347 & 8 & 1.03GB & 44.3MB & 23.7 \\ \hline
        RMAT-36 & 68719476736 & 1099511627776 & \emph{1.2E+14} & \emph{2.73E+16} & \emph{0.311} & 128 & 218PB & \emph{192TB} & \emph{1156} \\ \hline
        RMAT-42 & 4398046511104 & 70368744177664 & \emph{1.3E+16} & \emph{5.79E+18} & \emph{0.260} & 256 & 52.8EB & \emph{22.8PB} & \emph{2368} \\ \hline
    \end{tabular}
\end{table*}
In this section, we analyze the performance of the parallel triangle counting algorithm on both real and synthetic graphs. We implemented our new triangle counting algorithm using Python to accurately compute the exact communication volume and determine an analytic model based on the size of the graph and number of processors, and the ratio or percentage ($k$) of cover-edges from the BFS.  The results given in Table~\ref{tab:results} are exact communication volumes from our new algorithm on all of the graphs except the two large RMAT graphs where we compute the communication volume from the validated analytic model. For the comparison with prior approaches \cite{dolev2012tri,pearce2018k,sanders2023engineering}, we estimate the communication volume from the number of wedges which is exact for all graphs other than the last two large RMAT graphs where we estimate the number of wedges using graph theory.

For the real graphs, we find the actual value of $k$, the percentage of graph edges that are cover-edges, for an arbitrary breadth-first search, and set the number $p$ of processors to a reasonable number given the size of the graph.  For the synthetic graphs, we use large Graph500 RMAT graphs \cite{chakrabarti2004r} with parameters $a = 0.57$, $b = 0.19$, $c = 0.19$, and $d = 0.05$, for scale 36 and 42 with  $n=2^{\mbox{scale}}$ and $m=16n$, similar with the IARPA AGILE benchmark graphs, and set $p$ according to estimates of potential system sizes with sufficient memory to hold these large instances. 

For comparison, most prior parallel algorithms for triangle counting operate on the graph as follows.
A parallel loop over the vertices $v \in V$ produces all 2-paths (\emph{wedges}) where $\edge{v,v_1} , \edge{v,v_2} \in E$ and (w.l.o.g.) $v_1<v_2$. The processor that produces this wedge will send an open wedge query message containing the vertex ids of $v_1$ and $v_2$ to the processor that owns vertex $v_1$. If the consumer processor that receives this query message finds an edge $\edge{v_1,v_2} \in E$, then a local triangle counter is incremented. After producers and consumers complete all work, a global reduction over the $p$ triangle counts computes the total number of triangles in $G$.

\subsection{Graph500 RMAT Graphs}
\if 0
Pearce \cite{pearce2017triangle} shows that for large Graph500 graphs, the total running time closely tracks the wedge checking time. Their implementation for scale 36 takes 3960s on 1.5M CPUs of IBM BG/Q to count triangles. The result 
checks $2.73 \times 10^{16}$ wedges.
Since 36 bits are needed to represent each vertex and a wedge check contains two vertices, each wedge check uses 72 bits.
Consequently, the total data volume of checks is 218PB\footnote{Throughout this paper, a petabyte (PB) is $2^{50}$ bytes and an exabyte (EB) is $2^{60}$ bytes.}.
\fi

\begin{figure}
    \centering
    \includegraphics[width=0.5\textwidth]{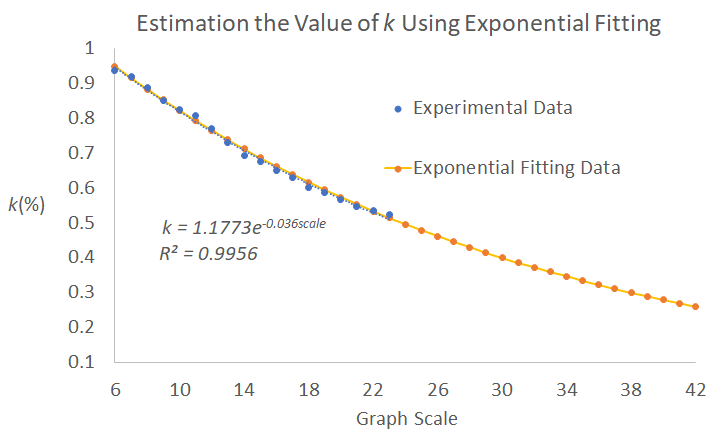}
    \caption{Estimate of $k$ using an exponential model, based on observations of $k$ for RMAT graph scale 6 to 23 graphs.}
    \label{fig:k}
\end{figure}

For the large  Graph500 RMAT graphs, the number of triangles is estimated from our model based on the number of triangles found in RMAT graphs up to scale 29 in the literature \cite{hoang2019disttc,giechaskiel2015pdtl,chakrabarti2004r,burkhardt2017graphing}. The fitting equation is $\mbox{\#Triangles} = 77.422n^{1.125}$ with $R^2 = 1.0$, where $n$ is the total number of vertices. The number of triangles estimated for scale 36 and 42 RMAT graphs are $1.20 \times 10^{14}$ and $1.30 \times 10^{16}$, respectively.

We estimate the number of wedges for the scale 36 and 42 Graph500 RMAT graphs based on the theorem given by Seshadhri \emph{et al.} in \cite{seshadhri2011hitchhiker}. According to their formula, we can estimate the expected number of vertices \(N(d)\) for a given out-degree \(d\). The number of wedges that can be formed by vertices with such a degree is calculated as \(\binom{d}{2} \times N(d)\), where \(\binom{d}{2}\) means choosing two from \(d\).

By summing all such wedges generated from the minimum ($ e\ln n $)  to the maximum degree (\(\sqrt{n}\)), which is the assumption of the formula,  we can approximate the total number of wedges in the given graph, where \(n\) is the total number of vertices. This is a conservative estimate because it only considers the out-degree instead of the sum of out and in-degrees. Employing the formula, we calculate the number of wedges to be \(2.73 \times 10^{16}\) for scale 36 and \(5.8 \times 10^{18}\) for scale 42. With $2\log n$ bits/wedge, the total volume of wedge checks is 218PB and 52.8EB for RMAT graphs of scales 36 and 42, respectively\footnote{Throughout this paper, a petabyte (PB) is $2^{50}$ bytes and an exabyte (EB) is $2^{60}$ bytes.}.


Beamer~\emph{et~al.} \cite{beamer2011searching} find a typical BFS on a scale 27 Graph500 RMAT graph has 7 levels, so 4 bits is a reasonable estimate for $\log D$ in our analyses of scale 36 and 42 graphs. 


\looseness=-1
The methodology for estimating the value of $k$ for RMAT graphs is as follows. RMAT graphs from scale 6 to 23 are generated, and the exact value of $k$ is determined for each by counting the horizontal-edges after a breadth-first search. The data fit to an exponential model $k = 1.1773 e ^{-0.036 \cdot \mbox{scale}}$ with very high $R^2 = 0.9956$ (see Fig.~\ref{fig:k}). For scale 36, $k$ is estimated to be 0.311 and for scale 42, $k$ is estimated to be 0.260.

In our new approach for scale 36, where the communication cost is 
$m \cdot (\lceil \log D \rceil + (kp + 3) \lceil\log n \rceil) + (p-1) \lceil \log n \rceil$ bits. With $\lceil \log D \rceil = 4$, and assuming $p=128$ processors, we have a total communication volume of 192TB, for a communication reduction of $1156 \times$. 
For scale 42, and assuming $p=256$ processors, we estimate the communication of our new triangle counting algorithm as 22.8PB, for a communication reduction of $2368 \times$.

\section{Related Work}
\label{sec:related_work}

\subsection{Sequential Algorithms}

The na\"ive approach for triangle counting uses brute-force: find all the triplets $\{v_a,v_b,v_c\}$, that is, permutations of three arbitrary vertices in the graph, and check whether each edge in the triplet exists. The time complexity is $\bigOmega {n^3}$. Latapy \cite{latapy2007practical} and Schank and Wagner \cite{schank2005finding} provide surveys of faster sequential algorithms. Triangle counting generally can be formulated as three kinds of problems: set (list) intersection, matrix multiplication and subgraph (cycle) query.

The three main \textbf{intersection-based} triangle counting algorithms are: 
1) the node-iterator algorithm iterates over all vertices and tests for each pair of neighbors whether they are connected by an edge, 2) the edge-iterator algorithm iterates over all edges and searches for common neighbors of the two endpoints of each edge, 
and 3) the forward algorithm is a refinement of the edge-iterator algorithm that computes the intersection of a subset of neighborhoods by using an orientation of the graph. 
The time complexity of node-iterator and edge-iterator are both \bigO{m \cdot d_{\mbox{max}}} and the forward algorithm is \bigO{m^{\frac{3}{2}}}, which has significantly better performance when 
$d_{\mbox{max}} \gg \sqrt{m}$
\cite{latapy2007practical}.

When performing the intersection of two lists, the commonly used techniques are \emph{merge-path}, \emph{binary search} and \emph{hashing-based} algorithms. 
 
Merge-path algorithms (e.g., \cite{green2014fast, shun2015multicore}) use two pointers to scan through neighbor lists of two endpoints from beginning to end in order to find the list intersection. During the scan, the pointer that points to a smaller value will be incremented. A triangle is enumerated if both pointers are incremented (i.e., they both point to the same vertex). 
Binary-search algorithms (e.g., \cite{hu2018tricore, hoang2019disttc}) organize the longer list as a binary tree and use the shorter list as search keys. For each search key, it descends through the binary-search tree in order to find the equal entry, which is a triangle. 
Hashing-based algorithms (e.g., \cite{pandey2021trust, shun2015multicore}) construct a hash table for one list and use the other list as search keys to find the common elements in the hash table. The hash table is used here to find the intersection of two adjacency lists, so it is not necessary to sort all the adjacency lists to find all the triangles. The running time is proportional to the size of the two adjacency lists.

Triangle counting using \textbf{matrix multiplication} \cite{azad2015parallel} relies on a linear algebra formulation for triangle counting. 
This approach can be optimized \cite{acer2019scalable} using matrix decomposition by decomposing $A$ into lower and upper triangular matrices $L$ and $U$, and then computing $(L \times U) \odot L$, or $(L \times L) \odot L$ to determine the number of triangles. The binary operator $\odot$ denotes the Hadamard product.

A \textbf{subgraph-based} approach for triangle counting  searches for all occurrences of a query graph, which is a triangle, in the input graph. Wang and Owens \cite{wang2019fast} use breadth-first search to update the subgraph matching approach by pruning more invalid vertices based on neighborhood encoding information, and using optimizations like $k$-step look-ahead to reduce unwanted intermediate results. Alon \emph{et~al.} \cite{alon1997finding} proposed a \bigO{m^{1.41}} algorithm to find length 3 cycles (triangle) in a graph, which is an improvement over the Itai and Rodeh sequential \bigO{m^{\frac{3}{2}}} algorithm \cite{itai1977}.

\subsection{Parallel Algorithms}
 
Map-reduce is a standard platform for large scale distributed computation. Cohen \cite{cohen2009graph} first demonstrated the capability of map-reduce to solve triangle counting in an approach that generates \emph{open wedges} between triples of vertices in the graph and determines if a closing edge exists that completes a triangle. Suri \emph{et~al.} \cite{suri2011counting} implemented triangle counting using map-reduce that ranks vertices by degree and distributes them across hosts. Pearce \cite{pearce2017triangle} developed an algorithm that is based on creating an augmented degree-ordered directed graph, where the original undirected edges are directed from low-degree to high degree, and implemented this approach in the distributed asynchronous graph processing framework HavoqGT. DistTC \cite{hoang2019disttc} is a distributed triangle counting implementation for multiple machines that uses mirror proxy on each partition to eliminate almost all the inner-host communication. TriCore \cite{hu2018tricore} partitions the graph held in a compressed-sparse row (CSR) data structure for multiple GPUs and uses stream buffers to load edge lists from CPU memory to GPU memory on-the-fly and then uses binary search to find the intersection.  Hu \emph{et~al.} \cite{hu2021accelerating} employed a ``copy-synchronize-search” pattern to improve the parallel threads efficiency of GPU and mixed the computing and memory intensive workloads together to improve the resource efficiency. Pandey \emph{et~al.} \cite{pandey2021trust} employed an vertex-centric hash-based design to scale triangle counting to over 1,000 GPUs.  TriC \cite{ghosh2020tric} exploits the vertex-based distributed triangle counting and sends vertices rather than edges (vertex pairs), and then the remote processor could translate the sequence of vertex IDs to correct combination of vertices as edges to reduce communication. An enhancement is then presented to TriC \cite{ghosh2022improved} that added a user-defined buffer to improve the flexibility of controlling the memory usage for large data sets and used a probabilistic data structure to optimize the edge lookups by trading off the accuracy. Strausz \emph{et~al.} \cite{Strausz22} use CLaMPI, a software caching layer that caches data retrieved through MPI remote memory access operations, to reduce the overall communication cost. Zeng \emph{et~al.} \cite{zeng2022htc} proposed a triangle counting algorithm that adaptively selects vertex-parallel and edge-parallel paradigm.

Panduranga \emph{et~al.} \cite{pandurangan2021distributed} and Dolev \emph{et~al.} \cite{dolev2012tri}'s work focused on the communication cost. Compared with our work, there are two major differences. First, they use the number of communication rounds to measure the total communication with a bandwidth restriction. However, we use the total volume of messages to evaluate the communication. 
Second, they are probabilistic algorithms, but our algorithm is a deterministic algorithm (Dolev~\emph{et al.}~\cite{dolev2012tri} also contains a deterministic version). Probabilistic methods cannot be used under scenarios with an exact result requirement.  Uhl \cite{uhl2021communication,sanders2023engineering} also focuses on reducing the communication cost of triangle counting. The paper's basic idea is only requiring communication for counting triangles consisting of cut edges. If the partition generates many cut edges, the proposed method cannot significantly reduce communication. In contrast, our method identifies a subset of the total set of edges independent of the partitioning and only transfers this smaller set of edges during the triangle counting to significantly reduce the total communication.

\section{Conclusions}
\label{sec:conclusion}

In this paper, we present novel sequential and parallel algorithms for counting and finding triangles in graphs based on a compact cover-edge set. The parallel algorithm is the first communication-efficient triangle counting algorithm by exploiting BFS horizontal-edges to significantly reduce the communication volume on massive graphs of practical interest.
Our approach uses the breadth-first search to significantly reduce the number of edges examined and minimize the communication required for triangle checking. The parallel algorithm achieves an order of magnitude or more reduction of communication volume for large graphs as communication is the main bottleneck for triangle counting on distributed memory systems.

\section{Reproducibility}
\label{sec:reproducibility}

The sequential triangle counting source code and the Python code for determining the communication volume of the parallel algorithm are open source and available on GitHub at \url{https://github.com/Bader-Research/triangle-counting}.  The input graphs are from the Stanford Network Analysis Project (SNAP) available from \url{http://snap.stanford.edu/}.

\bibliographystyle{IEEEtran}
\bibliography{ref}

\if 0
\begin{definition}[Modified Neighbor Set]\label{def:M-N}
The modified neighbor set of vertex $v$, denoted as $\hat{N}(v)$, is defined as $N(v)$ excluding vertices $w$ satisfying the following conditions: $w \in V$, $\edge{v,w}$ is a horizontal-edge, and $v < w$.
\end{definition}
In other words, the modified neighbor set consists of all the neighbors of $v$ except those vertices that are adjacent to $v$ through a horizontal-edge where $v$ has a lower label than $w$. 
The purpose of the modified neighbor set is to break symmetry and prevent the triple-counting of triangles that are formed exclusively by three horizontal-edges in the graph.
\begin{algorithm}[htbp]
\caption{CETC:Cover-Edge Triangle Counting \text{\color{red} version 1}}
\label{alg:seq}
\begin{algorithmic}[1]
\Require{Graph $G = (V, E)$}
\Ensure{Triangle Count $T$}
\State $T \leftarrow 0$  \label{l:init}
\While {G is not empty}
\State Build cover-edge set $S$ based on breadth-first search on $G$ and Remove the visited component from $G$ \label{l:bfs}
\ForAll{$e=\edge{u, w} \in S$ with $u<w$} \label{l:bloop}
       \State $N_{d}(u)=N(u)-\{v|v\in V \wedge (v>w) \wedge L(v)==L(u)\}$
       \State $N_{d}(w)=N(w)-\{v|v\in V \wedge (v>w) \wedge L(v)==L(u)\}$
       \State $T\leftarrow T + |N_d(u) \cap N_d(w)|$  \label{l:increase}
\EndFor \label{l:eloop}
\EndWhile
\end{algorithmic}
\end{algorithm}
\begin{algorithm}[htbp]
\caption{CETC:Cover-Edge Triangle Counting}
\label{alg:seq}
\begin{algorithmic}[1]
\Require{Graph $G = (V, E)$}
\Ensure{Triangle Count $T$}

\State $T \leftarrow 0$  \label{l:init}
\State Build cover-edge set $S$ based on breadth-first search on $G$ \label{l:bfs}
\State $\forall v \in V$ calculate $\hat{N}(v) $ \label{l:Nhat}
\ForAll{$e=\edge{u, w} \in S$} \label{l:bloop}
       \State $T\leftarrow T + |\hat{N}(u) \cap \hat{N}(w)|$  \label{l:increase}
\EndFor \label{l:eloop}
\end{algorithmic}
\end{algorithm}
\begin{figure*}[htbp]
    \centering
    \includegraphics[width=\textwidth]{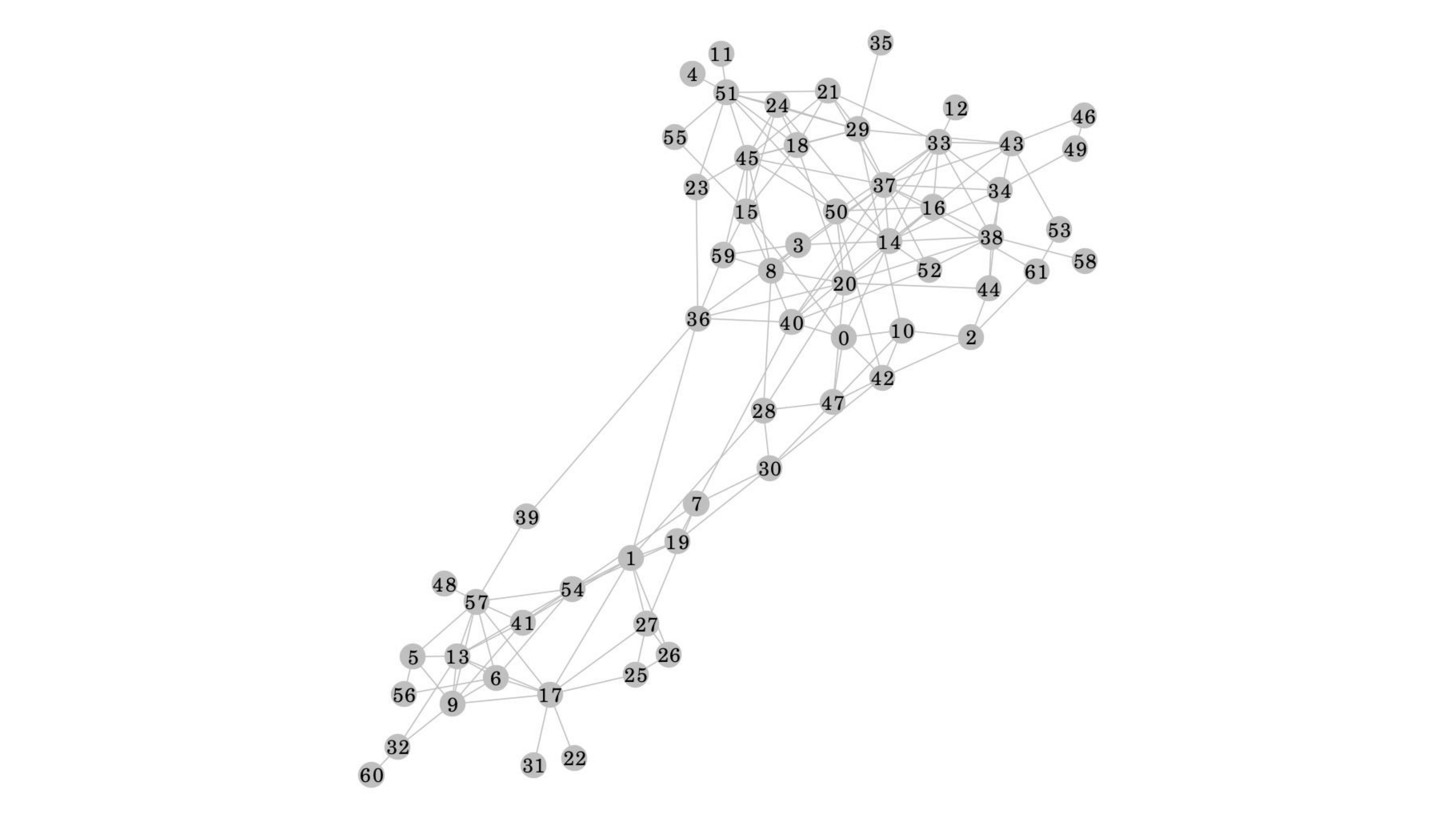}
    \caption{Interactions of 62 bottlenose dolphins represented by an undirected graph.}
    \label{fig:ex1}
\end{figure*}

We illustrate the parallel triangle counting algorithm using a real dataset of a very unusual social community of bottlenose dolphins living in a fjord, a geographically-isolated environment at the southernmost extreme of the species' range. A research team systematically surveyed the social interactions of this animal community for seven years, from November 1994 to November 2001 in Doubtful Sound, Fiordland, New Zealand \cite{Lusseau2003}. In Figure~\ref{fig:ex1} we show 62 individual dolphins from this survey and their interactions in an undirected social network graph.

\begin{figure}[htbp]
    \centering
    \includegraphics[width=0.5\textwidth]{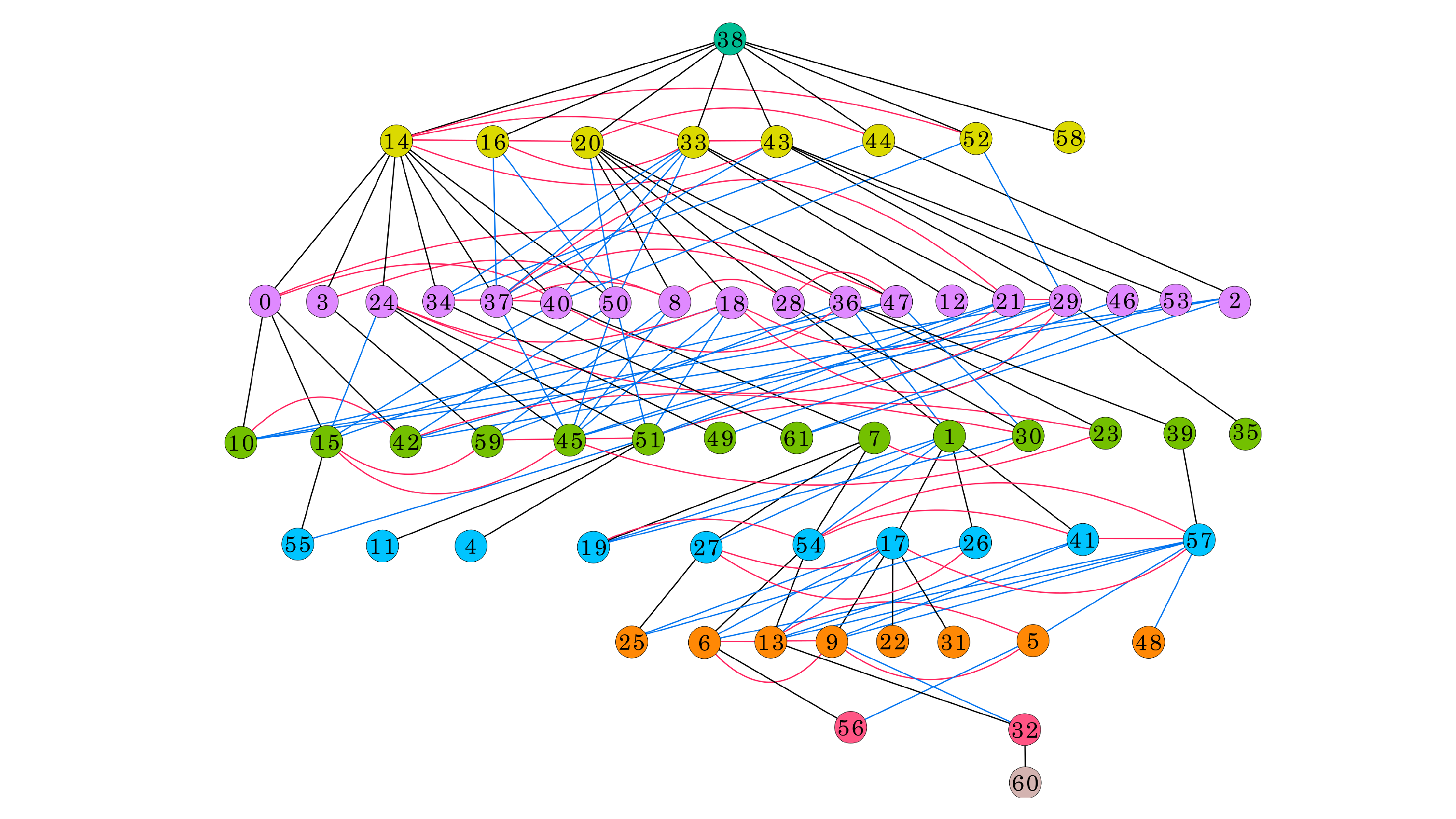}
    \caption{Breadth-first search of the dolphin social network graph. The vertices are organized by level with tree-edges in black, strut-edges in blue, and horizontal-edges in red.}
    \label{fig:ex2}
\end{figure}

\begin{figure}[htbp]
    \centering
    \includegraphics[width=0.5\textwidth]{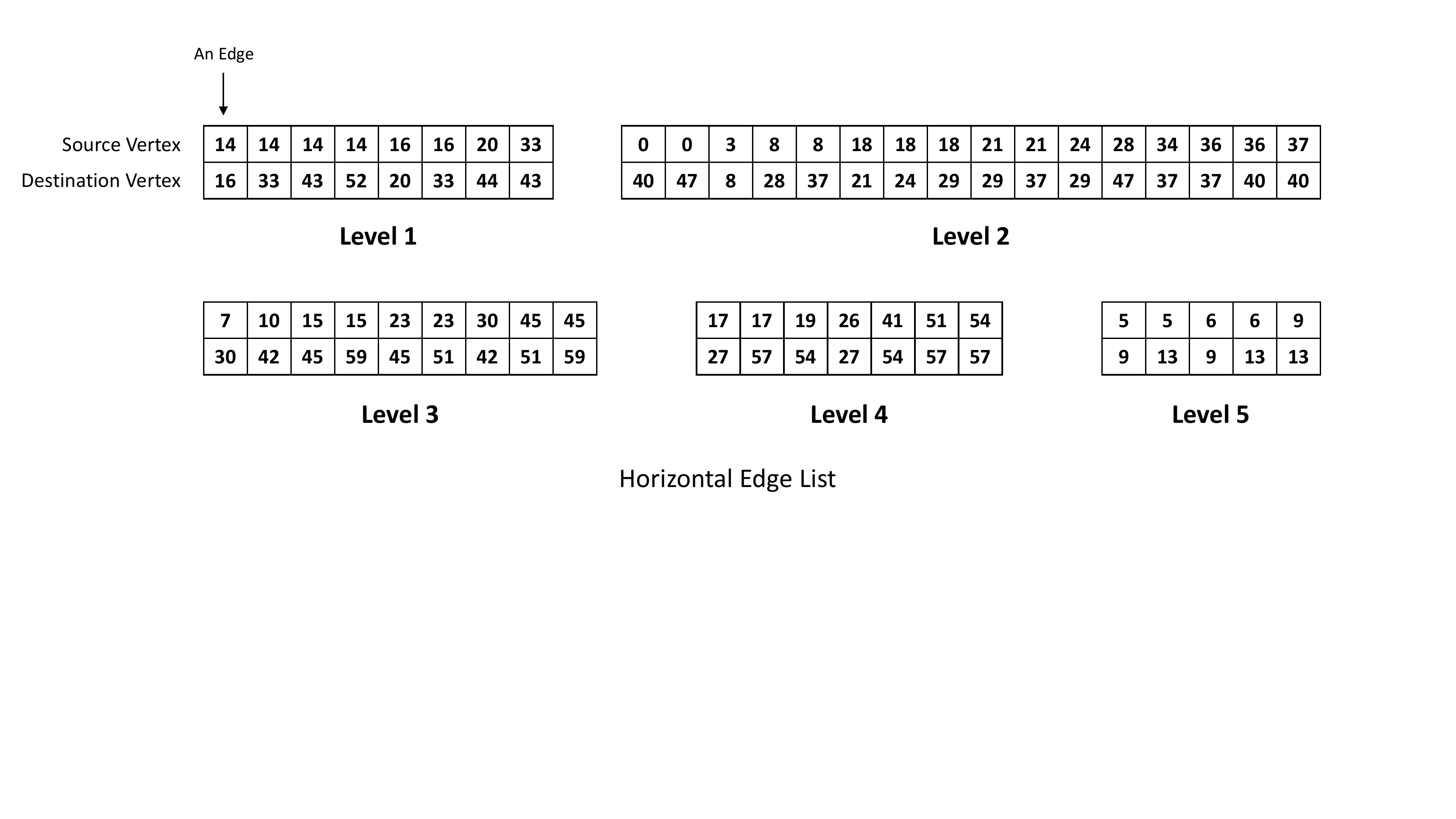}
    \caption{The horizontal-edges in the breadth-first search of the dolphin social network graph, organized by level.}
    \label{fig:ex3}
\end{figure}

The algorithm first runs parallel breadth-first search (BFS) (line~\ref{l:BFS}) on the graph $G$ and assigns a level $L(v)$ to each $v \in V$. If the diameter of the graph is $D$, then $\lceil \log D \rceil$ bits are needed per vertex to store its level. During the BFS, for each edge in the graph, we assign a bit as to whether the edge is a horizontal-edge or not.  Figure~\ref{fig:ex2} gives an example of the breadth-first search and identifying horizontal-edges in the dolphin social network graph. The horizontal-edges from this example are given in Figure~\ref{fig:ex3}.

\begin{figure*}[htbp]
    \centering
    \includegraphics[width=0.45\textwidth]{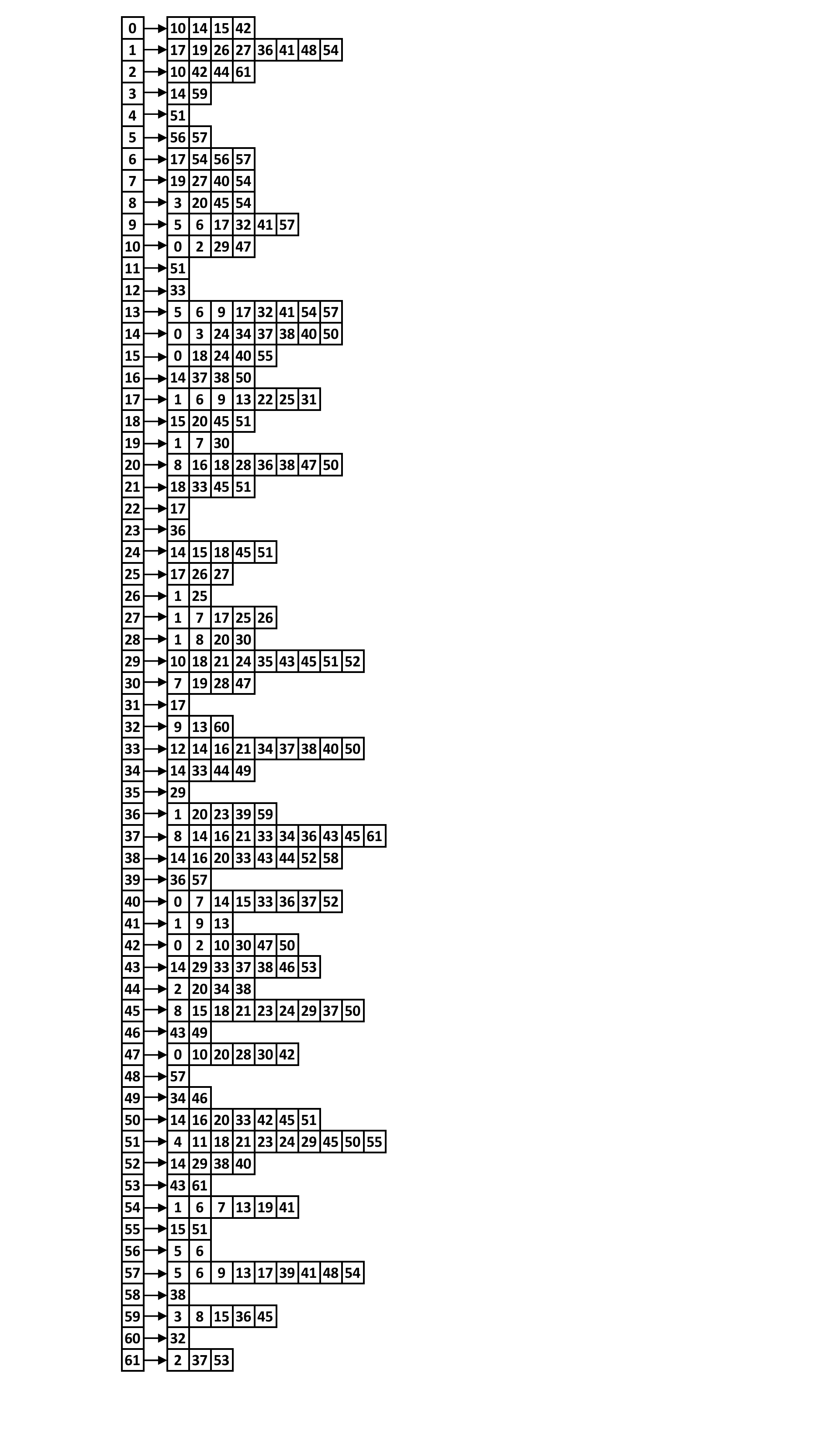}
    \includegraphics[width=0.45\textwidth]{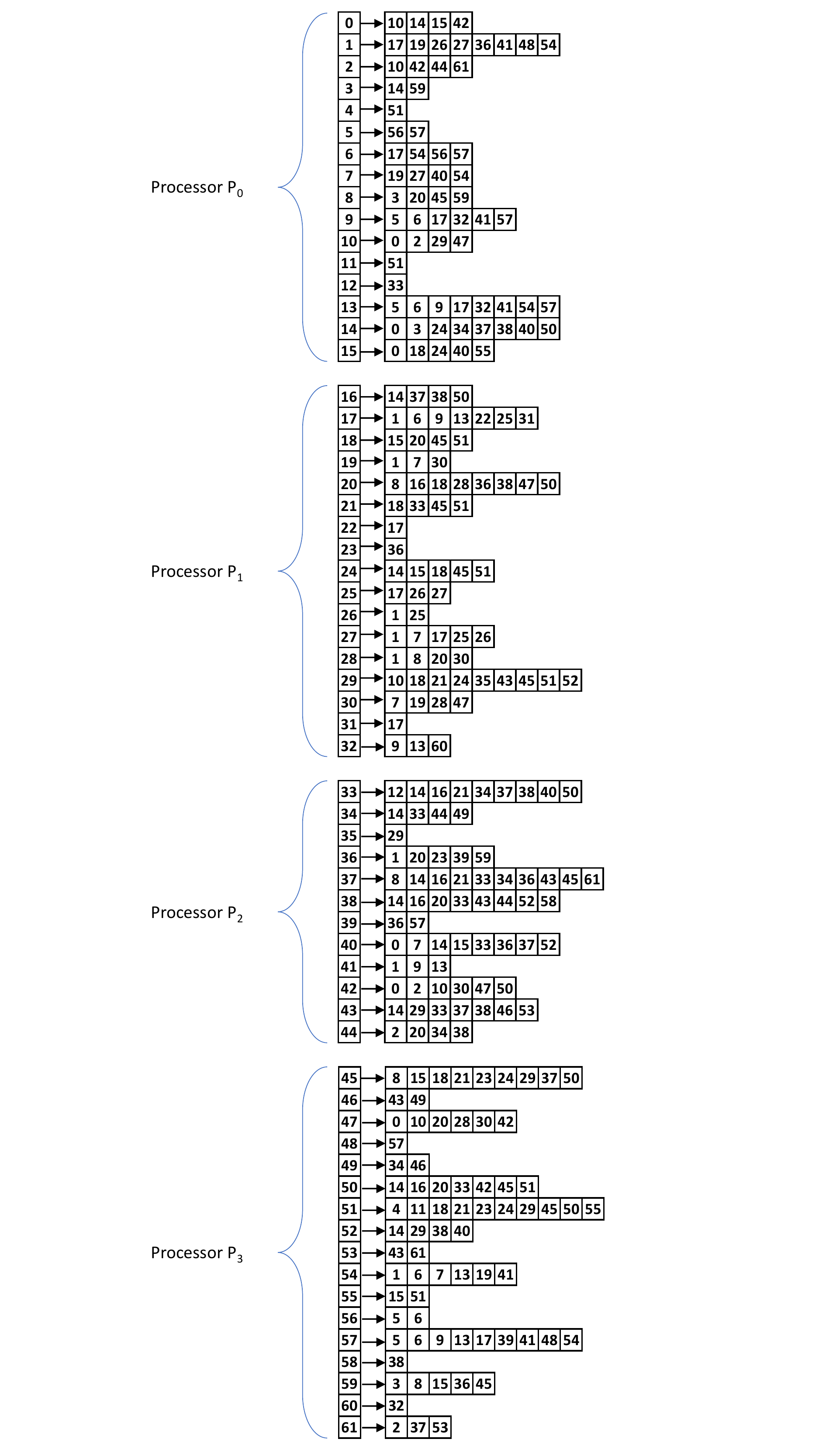}
    \caption{The modified neighborhoods of the 62 vertices of the dolphin social network (left), and the processor assignments for these modified neighborhoods (right).}
    \label{fig:ex4}
\end{figure*}
\emph{Triangle counting} \cite{al2018triangle} is defined as finding the number of unique triangles in a graph and is one of the fundamental problems in graph analytics.
Triangle counting is used in graph analytics such as clustering coefficients \cite{watts1998collective}, k-truss \cite{cohen2008trusses}, and triangle centrality \cite{burkhardt2021triangle}, and its importance is recognized by its use in high-performance computing benchmarks such as Graph500 \cite{graph500} and GraphChallenge \cite{GraphChallenge} as well as in future architecture systems design (e.g. IARPA AGILE \cite{slides_on_AGILE}). 

Triangle counting has been studied extensively in the literature for both sequential and parallel algorithms.  Latapy \cite{latapy2007practical} provides extensive literature on sequential triangle counting and finding algorithms. Different techniques, such as list intersection, matrix multiplication and subgraph matching \cite{pandey2021trust} can be used to identify all triangles in a graph.  

To improve the performance of triangle counting, Cohen \cite{cohen2009graph} designed a novel map-reduce parallelization of triangle counting that generates \emph{open wedges} between triples of vertices in the graph, and determines if a closing edge exists that completes a triangle. The method can avoid counting the same triangle multiple times and keep load balancing. Most parallel approaches for triangle counting \cite{pearce2017triangle,ghosh2020tric} partition the sparse graph data structure across the compute nodes and follow this strategy of generating open wedges that are sent to other compute nodes to find whether or not a closing edge exists. 

In traditional edge-based triangle counting methods, all triangles are identified by accumulating the sizes of intersection of each edge's pair of endpoints and dividing the sum by three (or employing a \emph{direction-oriented} approach). However, in this paper, we present a novel approach that efficiently identifies all triangles using a reduced set of edges known as a cover-edge set. By leveraging the cover-edge-based triangle counting method, unnecessary edge checks can be skipped while ensuring that no triangles are missed. This results in significantly fewer computation operations compared to existing methods. Moreover, for distributed parallel algorithms, the cover-edge-based method can greatly reduce overall communication requirements. Therefore, our proposed method offers improved efficiency and scalability for triangle counting.

The lower bound of communication $L$ is the worst-case optimal communication. So, for any specific graph $G$, if the optimal communication of algorithm $A$ on $G$ is $Com(A,G)$, we must have $L \ge Com(A,G)$.
\begin{lemma}
The lower bound of the communication complexity of triangle counting is not less than $\Omega(m)$, where $m$ is the total number of edges of a given graph. 
\label{lem:lowerbound}
\end{lemma}
\begin{proof}
Considering the worst case when the input graph is a clique with $n$ vertices and $m=\frac{n\times (n-1)}{2}$ edges. We assume that the clique is equally partitioned onto $p$ processors based on their edges or vertices (an unbalanced partition cannot reduce the total communication). For any edge $e_1=\edge {u,v}$ on one processor, we can definitely find another two edges $e_2$ and $e_3$ (or another vertex $w$) on other processors and they can form a triangle. This means that triangle $e_1,e_2$ and $e_3$ or $(u,v,w)$ will cause at least one communication. For every edge, we can find such different triangles. Calculating all numbers of triangles means that we must handle all such distributed cases, so, the total communication will be at least $m$, or $\Omega(m)$.  Hence the conclusion, the lower bound of the communication complexity of triangle counting cannot be less than $\Omega(m)$.
\end{proof}

\begin{algorithm}[htbp]
\footnotesize
\caption{Comm-CETC: Communication Efficient Triangle Counting}
\label{algparallel}
\begin{algorithmic}[1]
\Require{Graph $G = (V, E)$}
\Ensure{Triangle Count $T$}
\State for all $p_i, i \in \{0 \ldots p-1\}$ run in parallel \label{l:bcast1}
    \State \hspace{8pt} $t_i \leftarrow 0$  \label{l:pinit}
    \State \hspace{8pt} $ \forall v \in V$  and ($v$ is unvisited) \label{l:bgraph}
        \State \hspace{16pt} Build partial cover-edge set $S_i$ based on parallel breadth-first search and mark the corresponding component's vertices as visited, where $S_i$ is the cover-edge set on $p_i$ \label{l:cbfs}
        \State \hspace{16pt} Asynchronously send each edge $e=(v,w) \in S_i$ to all other processors except for the one that holds vertices $v$ or $w$. \label{l:send}
        \State \hspace{16pt} Asynchronously receive remote cover-edges and store them into $RS_i$ \label{l:receive}
        \State \hspace{16pt}  $\forall \edge{v,w} \in S_i$ with $v<w$ on $p_i$ \label{l:sloop1}
            \State \hspace{24pt} $\forall u \in V_i$ 
                \State \hspace{24pt} if $(v, w \in {N}(u)$ then
                    \State \hspace{32 pt}  if $(L(u)\neq L(v)) ||(L(u)=L(v)\&\& (w<u))$ then 
                        \State \hspace{40 pt} $t_i = t_i + 1$
        \State \hspace{16pt} Wait on $RS_i$ until all remote cover-edges are received\label{l:wait}
        \State \hspace{16pt}  $\forall \edge{v,w} \in RS_i$ with $v<w$ on $p_i$ \label{l:sloop2}
            \State \hspace{24pt} $\forall u \in V_i$ 
                \State \hspace{24pt} if $(v, w \in {N}(u)$ then
                    \State \hspace{32 pt}  if $(L(u)\neq L(v)) ||(L(u)=L(v)\&\& (w<u))$ then 
                        \State \hspace{40 pt} $t_i = t_i + 1$
        
        \For{ $\edge{v,w} \in RS_i$ with $v<w$} \label{l:sloop2}
            \For{$u \in V_i$ }
                \If{$(v, w \in {N}(u)$}
                    \If  {$(L(u)\neq L(v))||(L(u)=L(v)\&\& (w<u))$}
                        \State $t_i = t_i + 1$
                    \EndIf
                \EndIf
            \EndFor
        \EndFor\label{l:eloop2}    
    \State $T \leftarrow$ Reduce$(t_{i}, +)$ \label{l:reduce}
\end{algorithmic}
\end{algorithm}

\fi
\end{document}